\numberwithin{equation}{section}
\date{\today}
\newcolumntype{C}{>{\centering\arraybackslash} m{3cm} }
\tikzstyle{braceedge}=[decorate,decoration={brace,amplitude=10pt}]
\tikzstyle{square box}=[rectangle,fill=white,draw=black,minimum height=6mm,minimum width=6mm,yshift=0.7mm]
\tikzstyle{wire label}=[font=\footnotesize, auto,swap]
\tikzstyle{none}=[inner sep=0pt]
\tikzstyle{opt diredge}=[->]
\tikzstyle{diredge}=[->]
\tikzstyle{cdiag edge}=[-latex]
\tikzstyle{dashed edge}=[dashed]
\tikzstyle{boxedge}=[blue]
\tikzstyle{dashed boxedge}=[boxedge, dashed]
\tikzstyle{dot}=[circle,draw=black,inner sep=1pt,minimum width=2.5mm]
\tikzstyle{white dot}=[dot,fill=white]
\tikzstyle{yellow dot}=[dot,fill=yellow]
\tikzstyle{gray dot}=[dot,fill=gray!60!white]
\tikzstyle{small box}=[rectangle,inner sep=1pt,draw,fill=white,minimum width=3mm,minimum height=3mm,inner sep=1mm]
\tikzstyle{gn}=[circle,fill=Lime,draw=Black]
\tikzstyle{rn}=[circle,fill=Red,draw=Black]
\tikzstyle{H}=[rectangle,fill=Yellow,draw=Black]
\tikzstyle{line}=[scalar,fill=White,draw=Black]
\tikzstyle{io}=[rectangle,fill=White,draw=Black]
\tikzstyle{block}=[rectangle,fill=Orange,draw=Black]
\tikzstyle{graph}=[circle,fill=White,draw=Black]
\tikzstyle{empty}=[rectangle,fill=White,draw=White]
\tikzstyle{box}=[rectangle,fill=White,draw=Black]
\tikzstyle{Dot}=[circle,fill=Black,draw=Black,inner sep=0pt,minimum size=3pt]
\tikzstyle{diam}=[rectangle,fill=Black,draw,yscale=1.2,rotate=45]
\tikzstyle{gangle}=[rectangle,fill=Lime,draw=Black]
\tikzstyle{rangle}=[rectangle,fill=Red,draw=Black]
\tikzstyle{circ}=[circle,fill=none,draw=Black,scale=1.3]
\tikzstyle{ellip}=[ellipse,fill=none,draw=Black,scale=1.3,minimum width =1.3cm]
\tikzstyle{bbox}=[rectangle,fill=Blue,draw=Blue,scale=0.6]
\tikzstyle{gg}=[shape=rectangle,fill=White,draw=Black,dashed]
\tikzstyle{nodev}=[circle,fill=none,draw=Black,scale=1]
\tikzstyle{wirev}=[circle,fill=Black,draw=Black,inner sep=0pt,minimum size=0.5mm]
\tikzstyle{wirevred}=[circle,fill=Red,draw=Black,inner sep=0pt,minimum size=3pt]
\tikzstyle{simple}=[-,draw=Black]
\tikzstyle{directed}=[
\tikzstyle{bdirected}=[
\tikzstyle{bothdirs}=[bdirected,draw=Black]
\tikzstyle{bothdirsred}=[bdirected,draw=Red]
\tikzstyle{blue}=[-,draw=Blue]
\tikzstyle{redd}=[directed,draw=Red]
\tikzstyle{blued}=[directed,draw=Blue]
\tikzstyle{dash}=[dashed,draw=Black]
\tikzstyle{bigpic}=[scale=2.0]
\tikzstyle{every picture}=[baseline=-0.25em,scale=0.5]
\newcommand{\stikz}[2][1]{}
\newcommand{\cstikz}[2][1]{}
\def\bR{\begin{color}{red}} 
\def\bB{\begin{color}{blue}}
\def\bM{\begin{color}{magenta}}
\def\bC{\begin{color}{cyan}}
\def\bW{\begin{color}{white}}
\def\bBl{\begin{color}{black}} 
\def\bG{\begin{color}{green}}
\def\bY{\begin{color}{yellow}}
\def\e{\end{color}\xspace}
\newcommand{\presec}{}
\newcommand{\postsec}{}
\title{Equational reasoning with context-free \\ families of string diagrams
\thanks{The final publication is available at Springer via
\href{http://dx.doi.org/10.1007/978-3-319-21145-9_9}
{http://dx.doi.org/10.1007/978-3-319-21145-9\_9}}
}
\titlerunning{Equational reasoning with context-free families of string diagrams}
\author{Aleks Kissinger \and Vladimir Zamdzhiev}
\authorrunning{A. Kissinger & V. Zamdzhiev}
\institute{University of Oxford\\
\email{\{aleks.kissinger|vladimir.zamdzhiev\}@cs.ox.ac.uk}
}
\begin{document}
\maketitle
\begin{abstract}
  String diagrams provide an intuitive language for expressing networks of
  interacting processes graphically. A discrete representation of string
  diagrams, called string graphs, allows for mechanised equational reasoning by
  double-pushout rewriting. However, one often wishes to express not just single
  equations, but entire families of equations between diagrams of arbitrary
  size. To do this we define a class of context-free grammars, called B-ESG
  grammars, that are suitable for defining entire families of string graphs, and
  crucially, of string graph rewrite rules. We show that the language-membership
  and match-enumeration problems are decidable for these grammars,
  and hence that there is an algorithm for rewriting string graphs according to
  B-ESG rewrite patterns. We also show that it is possible to reason at the
  level of grammars by providing a simple method for transforming a grammar by
  string graph rewriting, and showing admissibility of the induced B-ESG rewrite
  pattern.
\end{abstract}
\presec
\presec
\presec
\section{Introduction}\label{sec:intro}
\postsec
A string diagram (Fig.~\ref{fig:string-diagram-encode}(a)) consists of a
collection of boxes
(typically used to represent certain maps, processes, machines, etc.) connected
by wires.
\begin{figure}
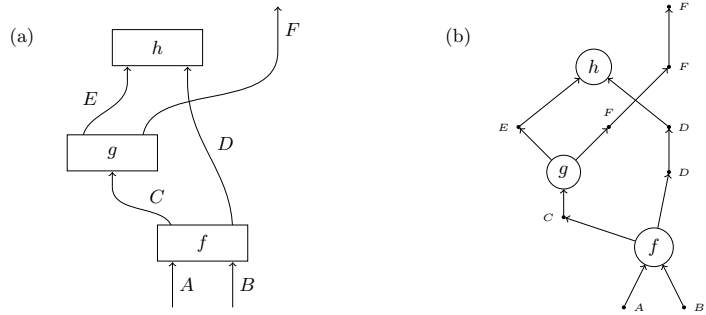

  \centering
  \scalebox{0.8}{\tikzfig{string_diagram_to_graph}}

  \caption{\label{fig:string-diagram-encode} A string diagram and its encoding
  as a string graph}
\end{figure}

They are essentially labelled directed graphs, but with one important
difference: wires, unlike edges, can be left open at one or both ends to form
inputs and outputs, so they have an inherently compositional nature.
Joyal and Street showed in 1991 that compositions of morphisms
in any symmetric monoidal category can be represented using string
diagrams~\cite{joyal_street}, and recently there has been much interest in
applying string diagram-based techniques in a wide variety of fields. In models of
concurrency, they give an elegant presentation of Petri nets with
boundary~\cite{Sobocinski:2010aa}, in computational linguistics, they are used
to compute compositional semantics for sentences~\cite{LambekvsLambek}, and in
control theory, they represent signal-flow
diagrams~\cite{Baez2014a,Bonchi2015}. Equational reasoning for string diagrams
has been used extensively in the program of categorical quantum
mechanics~\cite{cqm}, which provides elegant solutions to problems in quantum
computation, information, and foundations using purely diagrammatic
methods~\cite{picturalism,euler_necessity,NonLocLICS,Coecke:2010aa}.

All of these applications make heavy use of proofs by diagram rewriting. These
are proofs whereby some fixed set of string diagram equations are used to derive
new equations by cutting out the LHS and gluing in the RHS. For example, the
following string diagram equation:
\begin{equation}\label{eq:rule-example}
  \tikzfig{rule-example}
\end{equation}
can be applied to rewrite a larger diagram as follows:
\begin{equation}\label{eq:rewrite-example}
  \tikzfig{rewriting-example}
\end{equation}
There are two things to note here. First, the LHS and RHS of string diagram
equations always share a common boundary. In other words, we could think of the
boundary as an invariant sub-diagram that embeds into the
LHS and the RHS of the rule. Secondly, it is this invariant sub-diagram that is
used to glue in the RHS once the LHS is removed. We can formalise this process
using double-pushout (DPO) rewriting.

\begin{figure}[h]
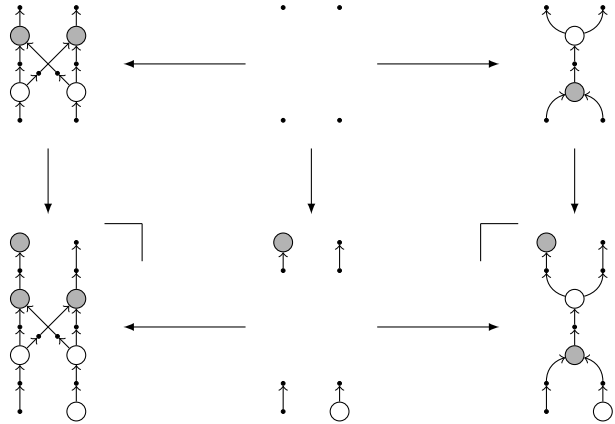

  \centering
  \tikzfig{dpo-example}
  \caption{DPO example}\label{fig:dpo}
\end{figure}
We begin by representing (directed) string diagrams as certain labelled,
(directed) graphs called \textit{string graphs} (see
Fig.~\ref{fig:string-diagram-encode}). Wires are replaced by chains of edges
containing special dummy
vertices called \textit{wire-vertices}. By contrast, the `real' vertices,
labelled here $f, g$ and $h$ are called \textit{node-vertices}. String graphs---
originally introduced under the name `open graphs' in~\cite{open_graphs1}---have
the advantage that they are purely combinatoric objects, as opposed to the
geometric objects like string diagrams. As such, they form a suitable category
for performing DPO rewriting. The DPO diagram associated with the
rewrite~\eqref{eq:rewrite-example} is given in Figure~\ref{fig:dpo},
where the top row is the string graph rewrite rule for~\eqref{eq:rule-example},
the left square is the pushout complement
removing the LHS, and the right square is the pushout gluing in the RHS.

This technique has been used to mechanise proofs involving string diagrams, and
forms the foundation of the diagrammatic proof assistant
Quantomatic~\cite{quanto-cade}. However, proving equations between single string
diagrams is just half the story. Typically, one wishes to prove properties about
entire families of diagrams. For example, suppose we have a node that serves as
an $n$-fold `copy' process. Then, one might require, as in
Fig.~\ref{fig:n-copies}(a), that connecting another node to the `copy' node
would result in $n$ copies. Thus, we have an infinite family of (very similar) equations,
one for each $n$.

\begin{figure}
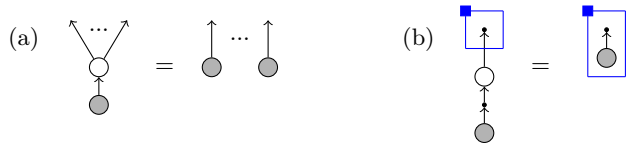

  \centering
  \tikzfig{n-copies-combined}

  \caption{\label{fig:n-copies} An $n$-fold copy rule, and its formalisation
  using !-box notation.}
\end{figure}

Such a family of equations can be easily captured using a graphical syntax called
\textit{!-box notation}. Here, we can indicate that a subgraph (along with its
adjacent edges) on the LHS and RHS of a rule can be repeated any number of times
by wrapping a box around it, as in Fig.~\ref{fig:n-copies}(b). This rewrite pattern
can then be instantiated to a concrete rewrite rule by fixing the number of copies
of each !-box to retain. A formal description of this instantiation process is given
in~\cite{pattern_graphs}. This procedure is straightforward to mechanise, and is
the main mechanism Quantomatic uses for reasoning about string diagram families.
However, the types of string graph languages representable using !-box notation is
quite limited. For example, the languages produced by string graphs with
!-boxes always have finitely-bounded diameter and chromatic number.\footnote{For colourability and cliques in string diagrams, we treat chains of wire-vertices as single edges.} Thus many
naturally-occurring languages containing chains or cliques of arbitrary size are
not possible.

In~\cite{ContextFreeGAM}, we showed that the languages generated by string
graphs with (non-overlapping) !-boxes can always be generated using a context-free
vertex replacement grammar. Thus, we conjectured that a general notion of a
context-free grammar for string graphs and string graph rewrite rules could
produce a more expressive language for reasoning about families of string
diagrams. In this paper, we will show that this is indeed the case.

We begin by defining a context-free grammar for string graphs which is built on
the well-known B-edNCE class of grammars. We call these grammars
\textit{boundary encoded string graph} (B-ESG) grammars, where `boundary' here
means the grammar satisfies the same boundary condition as B-edNCE grammars. An
encoded string graph is a slight generalization of a string graph, where certain
fixed subgraphs can be encoded as a single edge with a special label. A B-ESG
grammar consists of a B-edNCE grammar for producing encoded string graphs, and a
set of decoding rules for replacing these special edges. For example, this
grammar:
\begin{equation}\label{eq:complete-example}
\tikzfig{complete-example-nonempty}
\end{equation}
produces a complete graph of $n \geq 2$ white node-vertices, connected by wires,
which would not be possible in the !-box language. We define B-ESG rules analogously
as two grammars with identical non-terminals, and a 1-to-1 correspondence between their
productions. Consider, for example, the following rule:
\begin{equation}\label{eq:complete-rule-example}
  \tikzfig{complete-rule-example-nonempty}
\end{equation}
It will rewrite any complete graph of white node-vertices into a star with white node-vertices on every outgoing wire:
\ctikzfig{complete-ex-inst}
In this paper, we will define a family of context-free grammars suitable for equational reasoning with string diagrams. After giving formal definitions for string graphs in Section~\ref{sec:prelims}, we will define encoded string diagrams and B-ESG grammars in Section~\ref{sec:besg}, as well as prove that a B-ESG grammar always yields a language consisting of well-formed string graphs. We also show that the B-ESG membership problem is decidable as well as the match enumeration problem. In Section~\ref{sec:rewrite} we will extend to B-ESG rewrite rules and show that these always form a language of well-formed string graph rewrite rules (i.e. the generated rules always have corresponding inputs/outputs). In Section~\ref{sec:transforming}, we give a simple example of meta-level reasoning with B-ESG grammars, whereby string graph rewrite rules can be lifted to admissible transformations of B-ESG grammars, thus proving entire families of diagram equations simultaneously. We then generalise this result to show how basic B-ESG on B-ESG rewriting is possible. In the conclusion, we discuss how these principles might be extended to more powerful versions of B-ESG on B-ESG rewriting and structural induction.
\presec
\section{Preliminaries}\label{sec:prelims}
\postsec
\begin{definition}[Graph \cite{graph_grammar_handbook}] \rm
  A \textit{graph} over an alphabet of vertex labels $\Sigma$ and an alphabet of edge
  labels $\Gamma$ is a tuple $H = (V, E, \lambda)$, where $V$ is a finite set
  of vertices, $E \subseteq \{(v, \gamma, w) | v, w \in V, v \not= w, \gamma \in
  \Gamma\}$ is the set of edges and $\lambda : V \to \Sigma$ is the node
  labelling function. The set of all graphs with labels $\Sigma, \Gamma$ is denoted $GR_{\Sigma,\Gamma}$.
\end{definition}
Note that this notion of graphs forbids self-loops. This is a standard (and convenient) assumption in the vertex-replacement grammar literature. It will not get in our way, since we always use chains of wire-vertices to encode self-loops in string diagrams. Also, this notion of graph allows parallel edges only if they have different types. Again, this isn't problematic for our use case, because string graphs cannot have parallel edges (but they do allow parallel \textit{wires}, cf. Definition~\ref{def:wire}).
\presec
\subsection{B-edNCE grammars}
\postsec
We will focus on neighbourhood-controlled embedding (NCE) grammars. This is a type of graph grammar where non-terminal vertices are replaced by graphs according to a set of \textit{productions}, each endowed with a set of \textit{connection instructions} which determine how the new graph should be connected to the neighbourhood of the non-terminal. edNCE grammars are NCE grammars, with \textbf{e}dge labels and \textbf{d}irections, and B-edNCE grammars additionally impose the `boundary' condition~\cite{BoundaryGrammar} which guarantees confluence for applications of productions. They form an important subclass of all confluent edNCE grammars (C-edNCE grammars) that is particularly easy to characterise.
\begin{definition}[Graph with embedding \cite{graph_grammar_handbook}] \rm
  A \textit{graph with embedding} over labels $\Sigma, \Gamma$ is a pair $(H,
  C),$ where $H$ is a graph over $\Sigma, \Gamma$ and $C \subseteq \Sigma
  \times \Gamma \times \Gamma \times V_H \times \{in, out\}$. $C$ is called a \textit{connection relation} and its elements are called \textit{connection instructions}. The set of all graphs with embedding over $\Sigma, \Gamma$ is denoted by $GRE_{\Sigma, \Gamma}$.
\end{definition}
Graph grammars operate by substituting a graph (with embedding) for a non-terminal vertex of another graph. Connection instructions are used to introduce edges connected to the new graph based on edges connected to the non-terminal. A connection instruction $(\sigma, \beta / \gamma, x, d)$ says to add an edge labelled $\gamma$ connected to the vertex $x$ in the new graph, for every $\beta$-labelled edge connecting a $\sigma$-labelled vertex to the non-terminal. $d$ then indicates whether this rule applies to in-edges or out-edges of the non-terminal. For the formal definition of this substitution operation, see e.g.~\cite{graph_grammar_handbook}.
\begin{definition}[edNCE Graph Grammar \cite{graph_grammar_handbook}] \rm
  An \textit{edNCE Graph Grammar} is a tuple $G = (\Sigma, \Delta, \Gamma,
  \Omega, P, S)$, where 
    $\Sigma$ is the alphabet of vertex labels,
    $\Delta \subseteq \Sigma$ is the alphabet of terminal vertex labels,
    $\Gamma$ is the alphabet of edge labels,
    $\Omega \subseteq \Gamma$ is the alphabet of final edge labels,
    $P$ is a finite set of productions and
    $S \in \Sigma - \Delta$ is the initial nonterminal.
  Productions are of the form $X \rightarrow (D, C)$, where $X \in \Sigma -
  \Delta$ is a non-terminal vertex and $(D, C) \in GRE_{\Sigma, \Gamma}$ is a
  graph with embedding.
\end{definition}
A \textit{derivation} in an edNCE grammar is a sequence of substitutions of non-terminals starting from the the graph which just contains the single starting non-terminal $S$. The set of all graphs (not containing non-terminals) isomorphic to some graph reachable in this manner is called the \textit{language} of the grammar. edNCE grammars with the additional property that the order in which non-terminals are expanded is irrelevant are called \textit{confluent} edNCE, or C-edNCE, grammars. We will focus on a special case:
\begin{definition}[B-edNCE grammar] \rm
  A \textit{boundary edNCE}, or B-edNCE, grammar is a grammar such that for all productions $p : X \to (D,C)$, $D$ contains no adjacent non-terminal nodes and $C$ contains no connection instructions of the form $(\sigma, \beta/\gamma, x, d)$ where $\sigma$ is a non-terminal label.
\end{definition}
\presec
\subsection{String graphs and rewriting}
\postsec
\begin{definition}[String Graph] \label{def:string-graph} \rm
  For disjoint sets $\mathcal N = \{ N_f, N_g, \ldots \}$, $\mathcal W = \{ W_A, W_B, \ldots \}$, a (directed) \textit{string graph} is a directed graph labelled by the set $\mathcal N \cup \mathcal W$, where vertices with labels in $\mathcal N$ are called \textit{node-vertices} and vertices with labels in $\mathcal W$ are called \textit{wire-vertices}, and the following conditions hold:
    (1) there are no edges directly connecting two node-vertices
    (2) the in-degree of every wire-vertex is at most one and
    (3) the out-degree of every wire-vertex is at most one.
\end{definition}
The category \textbf{SGraph} has as its objects string graphs and its morphisms string graph homomorphisms (i.e.~graph homomorphisms respecting labels). In full generality, string graphs also allow one to restrict which node-vertices can be connected to which wire-vertices, and allow for an ordering on in- and out-edges (e.g. for non-commutative maps), but for simplicity, we will consider the case where any node-vertex can be connected to any wire-vertex and the ordering is irrelevant.

We will depict wire-vertices as small black dots and node-vertices as larger nodes of various shapes and colours. We can define undirected string graphs analogously, by replacing the last two conditions with the requirement that each wire-vertex have degree at most 2. To avoid excessive duplication we will state all of our results for the directed case, but similar results carry through to the undirected case. Thus, we will occasionally give undirected examples when they are more convenient than their directed counterparts.
\begin{definition}[Inputs, Outputs and Boundary Vertices] \rm
A wire-vertex of a string graph $G$ is called an \textit{input} if it has no incoming
edges. A wire-vertex with no outgoing edges is called an \textit{output}.
The boundary of $G$ consists of all of its inputs and outputs.
\end{definition}
Wires in string diagrams are geometric in nature. They are encoded in string graphs as chains of wire-vertices.
\begin{definition}\label{def:wire} \rm
  A \textit{wire} is a maximal connected subgraph of a string graph consisting of only wire-vertices and at least one edge. There are three cases: (a) it forms a simple directed cycle, which is called a \textit{circle}, (b) it is a chain where one or both endpoints are connected to node-vertices, which is called an \textit{attached wire}, or (c) it is a chain not connected to any node-verties, which is called a \textit{bare wire}.
\end{definition}
\begin{figure}
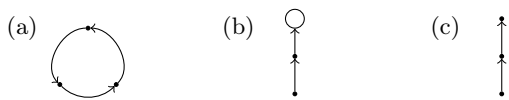

  \centering

  \tikzfig{wire-cases}

  \caption{\label{fig:wire-cases} Types of wires: (a) circle, (b) attached wire, (c) bare wire}
\end{figure}
In particular, when considering embeddings of string diagrams into each other, wires are allowed to be sub-divided arbitrarily. To accommodate this behaviour with string graphs, we represent wires as chains of wire-vertices. Of course, this choice is not unique, as we can represent a single wire with a chain of wire-vertices of any length. Being isomorphic up to the number of wire-vertices representing each wire is a natural notion of `sameness' for string diagrams, which is called \textit{wire-homeomorphism}.
\begin{definition}[Wire-homeomorphic string graphs] \label{def:wire-homeo} \rm
  Two string graphs $G$ and $G'$ are called \textit{wire-homeomorphic}, written $G \sim G'$ if $G'$ can be obtained from $G$ by either merging two adjacent wire-vertices (left) or by splitting a wire-vertex into two adjacent wire-vertices (right) any number of times:
    \[ \tikzfig{two-wires} \ \ \mapsto \ \ \tikzfig{one-wire}
      \quad\quad\quad\quad
    \tikzfig{one-wire} \ \ \mapsto \ \ \tikzfig{two-wires} \]
\end{definition}
Two string graphs $G \sim G'$ are "semantically" the same and only differ by
the length of some of its wires. Note, that for any string graph $G$, its
wire-homeomorphism class has a unique minimal representative, which can be
obtained from $G$ by just contracting wires as much as possible, so
wire-homeomorphism is decidable.

In order to rewrite a string graph using a string graph rewrite rule, one first finds a matching of the LHS. Note we say an edge is incident to a subgraph $K \subseteq G$ if it connects a vertex in $K$ to a vertex in $G \backslash K$.
\begin{definition} \rm
  Let $L$ be a string graph with boundary $B \subseteq L$. Then a \textit{matching} of $L$ onto a string graph $H$ is an injective string graph homomorphism $m : L \to \widetilde H$ where $H \sim \widetilde H$ and the only edges incident to the image $m(L) \subseteq \widetilde H$ are also incident to $m(B)$.
\end{definition}
In other words, a matching satisfies the `no dangling wires' condition with respect to the boundary of $L$. Note that matching is done modulo wire-homeomorphism. This allows wires in the target graph to grow if necessary to injectively embed the pattern. In the example below, the embedding on the left fails, but the embedding into a wire-homeomorphic graph (right) succeeds:
\[
\tikzfig{matching-grow}
\qquad\sim\qquad
\tikzfig{matching-grow1}
\]
As usual, string graph rewrite rules are encoded as spans $L \leftarrow B \rightarrow R$, where $B$ is the common boundary of $L$ and $R$.
\begin{definition} \rm
  A \textit{rewrite} of a string graph $G$ by a rule $L \leftarrow B \rightarrow
  R$ using a matching $m : L \to \widetilde G$ (for $\widetilde G \sim G$)
  consists of a pushout complement (1) followed by a pushout (2) in
  \textbf{SGraph}:
  \ctikzfig{dpo-squares}
\end{definition}
It was shown in~\cite{open_graphs1} that for any matching of the LHS of a string graph rewrite rule, the pushout complement (1) and the pushout (2) exist and are unique, so DPO rewriting for string graph rewrite rules is well-defined.
\presec
\section{Encoded string graphs and B-ESG grammars}\label{sec:besg}
\postsec
We will introduce a type of context-free graph grammar suitable for defining families of string graphs, which is essentially a restriction of the class of B-edNCE grammars. However, to squeeze out a bit more expressive power, rather than using such grammars to generate string graphs themselves, we use them to generate \textit{encoded string graphs}. An encoded string graph allows us to `fold' some collection of fixed subgraphs into single edges, which will allow us more flexibility when it comes to the types of languages we can produce.
\begin{definition}[Encoded string graph] \rm
  Let $\mathcal E = \{ \alpha, \beta, \ldots \}$ be a finite set of \textit{encoding symbols}. An \textit{encoded string graph} is a string graph where we additionally allow edges labelled by encoding symbols $\alpha \in \mathcal E$ to connect pairs of node-vertices.
\end{definition}
\begin{definition}[Decoding system] \rm
  A \textit{decoding system} $T$ is a set of DPO rewrite rules of the form:
  \begin{equation}\label{eq:simple-dpo-def}
    \tikzfig{simple-dpo-def}\ ,
  \end{equation}
  one for every triple $(\alpha, N_1, N_2) \in \mathcal E \times \mathcal N \times \mathcal N$, where the LHS consists of a single edge labeled $\alpha$ connecting an $N_1$-labelled node-vertex to an $N_2$-labelled node-vertex, and the RHS is a connected string graph that contains no inputs, outputs, or encoding labels.
\end{definition}
Note, the invariant part of~\eqref{eq:simple-dpo-def} consists of the two shared node-vertices. Thus, by construction $T$ is confluent (since no two rules apply in the same location) and terminating (since no encoding labels occur in the RHS of a rule). Thus, \textit{decoding} an encoded string graph consists of normalising with respect to $T$.

Throughout the rest of the paper, we assume that all of our grammars use the
same vertex and edge label alphabets, $\Sigma$ and $\Gamma$ respectively, and
also the same initial non-terminal $S$. The alphabet for terminal vertex labels
is $\Delta := \mathcal N \cup \mathcal W$.  We do not allow any non-final edge
labels. We assume $\mathcal E \subset \Gamma$ and any edge with label in
$\mathcal E$ will be called an \textit{encoding edge}.
\begin{definition}[B-ESG grammar]\label{def:besg} \rm
  A \textit{B-ESG} grammar is a pair $B = (G, T)$, where $T$ is a
  decoding system and
  $G=(\Sigma,\Delta,\Gamma,\Gamma,P,S)$ is a B-edNCE grammar, such that for
  every production
  $X \to (D,C) \in P$, the following conditions are satisfied:
  \begin{enumerate}
    \item[N1:] An edge carries an encoding label if and only if it connects a pair of node-vertices.
    \item[N2:] Any connection instruction of the form
      $(N, \alpha/\beta, x, d)$ where $N$ is a node-vertex label and $x$ is a node-vertex, must have $\beta \in \mathcal E$.
    \item[W1:] Every wire-vertex in $D$ has in-degree at most one and out-degree
      at most one.
    \item[W2:] There are no connection instructions of the form $(\sigma, \alpha/\beta, x, d)$ where $\sigma$ is any vertex label and $x$ is a wire-vertex.
    \item[W3:] For $W$ a wire-vertex label and each $\gamma$ and $d$, there is at most one connection instruction of the form $(W, \gamma/\delta, x, d)$ and we must have $\delta \not \in \mathcal{E}.$
    \item[W4:] Let $y$ be a non-terminal vertex with label $Y$ in $D$.
      If $y$ is adjacent to a wire-vertex labelled $W$ via an edge with direction $d$ and label
      $\beta$ or there's a connection instruction of the form
      $(W,\alpha/\beta,y,d)$, then all productions $Y$ must contain a connection
      instruction of the form $(W, \beta/\gamma,z,d)$.
  \end{enumerate}
\end{definition}
The conditions N1 and N2 guarantee that node-vertices never become directly connected by an edge, unless that edge has an encoding label. W1-3 ensure that wires never `split', i.e. wire-vertices always have at most one input or output. The final condition, which won't be necessary until the next section, ensures that inputs stay inputs and outputs stay outputs in a sentential form throughout the course of the derivation.
%
%
\begin{definition}[B-ESG concrete derivation] \rm
  A \textit{concrete derivation} for a B-ESG grammar $B = (G,T)$ with
  $S$ the initial non-terminal for $G$,
  consists of a derivation $S \Longrightarrow_*^G H_1$ in $G$, where $H_1$ contains no
  non-terminals, followed by a decoding $H_1 \Longrightarrow_*^T H_2$.
  We will denote such a concrete derivation as $S \Longrightarrow_*^G H_1
  \Longrightarrow_*^T H_2$ or simply with $S \Longrightarrow_*^B H_2$
  if the graph $H_1$ is not relevant for the context.
\end{definition}
Note that in the above definition $S$ refers to both the initial non-terminal
label, but also to a graph with a single vertex with label $S$ which is the
starting graph for a derivation. As usual, the \textit{language} of a B-ESG
grammar $B$ is given by $L(B):=\{H\ |\ S \Longrightarrow_*^{B} H\}$.
\begin{theorem}\label{lem:besg_language} \rm
  Every graph in the language of a B-ESG grammar is a string graph.
\end{theorem}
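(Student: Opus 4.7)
The plan is to prove the statement in two stages that mirror the structure of a concrete derivation $S \Longrightarrow_*^G H_1 \Longrightarrow_*^T H_2$. Stage~1 would establish by induction on the length of the B-edNCE derivation that every sentential form $H$ arising in $G$ satisfies three invariants: \emph{(I1)} every edge between two terminal node-vertices carries a label in $\mathcal{E}$; \emph{(I2)} no edge incident to a terminal wire-vertex carries a label in $\mathcal{E}$; and \emph{(I3)} every terminal wire-vertex has in-degree and out-degree at most one, counting all incident edges (including those to non-terminals). The base case $H = S$ is trivial because $S$ contains no terminal vertices. When the derivation terminates with $H_1$ (no non-terminals left), the three invariants say precisely that $H_1$ is a well-formed encoded string graph.

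For the inductive step I would substitute a non-terminal $x$ using a production $X \to (D,C)$ and verify the invariants for every edge of the resulting graph. Edges lying entirely inside $H \setminus \{x\}$ are untouched. Edges lying entirely inside $D$ are controlled by~N1 (encoding label iff between node-vertices) and~W1 (wire-vertex degrees inside $D$). The remaining edges come from applying connection instructions at each neighbour of $x$: when both the neighbour and the target vertex in $D$ are node-vertices, N2 forces the new label into $\mathcal{E}$; when the neighbour is a wire-vertex, W3 forces the new label out of $\mathcal{E}$; and W2 rules out any new edges landing on wire-vertices of $D$. The delicate point, which I expect to be the main obstacle, is showing that (I3) is preserved for a pre-existing wire-vertex $w$ whose edges to $x$ get replaced. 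The argument is that W3 allows at most one connection instruction for each triple $(W,\gamma,d)$, so each individual edge of $w$ to $x$ in direction $d$ with label $\gamma$ yields at most one new edge from $w$ in direction $d$; hence the in- and out-degrees of $w$ cannot exceed their values in $H$, which were already bounded by one by the inductive hypothesis.

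Stage~2 is considerably easier. Each rule of the decoding system $T$ replaces a single encoding edge between two node-vertices by a connected string graph $R$ with no inputs, outputs, or encoding labels. One such rewrite preserves (I1)--(I3): every pre-existing wire-vertex keeps its edges and degree bounds; the fresh wire-vertices introduced by $R$ have in- and out-degree exactly one inside $R$ (absence of inputs/outputs forces the degrees to be at least one, and $R$ being a string graph forces them to be at most one) and receive no further edges from the gluing; and the two gluing node-vertices acquire only wire-vertex neighbours because $R$ itself is a string graph and hence has no direct node-to-node edges. Since $T$ is confluent and terminating and its right-hand sides contain no encoding labels, normalising $H_1$ produces $H_2$ with no encoding edges, no direct edges between node-vertices, and every wire-vertex of in- and out-degree at most one, which is exactly a string graph in the sense of Definition~\ref{def:string-graph}.
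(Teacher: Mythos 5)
Your proposal is correct and follows essentially the same route as the paper: an induction on the length of the $G$-derivation maintaining an ``encoded string graph plus non-terminals'' invariant (your (I1)--(I3) are the paper's notion of ESG-form), with W1/W2 handling new wire-vertices, W3 handling pre-existing ones, and N1/N2 handling node-vertex edges, followed by the observation that decoding preserves the string-graph conditions. Your Stage~2 is more detailed than the paper's one-sentence dismissal of the decoding step, but it is the same argument.
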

\begin{proof}
  Let $B=(G,T)$ be a B-ESG grammar. Decoding an encoded string graph will always
  produce a string graph, so it suffices to show that any derivation from $G$
  produces an encoded string graph. We call a sentential form an ESG-form if it is an
  encoded string graph, which possibly has some additional non-terminals that
  are either connected to node-vertices or are connected to wire-vertices in such
  a way that all wire-vertices have at most 1 in-edge and 1 out-edge. We show that any
  derivation starting from an ESG-form is an encoded string graph.
  This can be done by induction on the length of derivations. If the derivation
  is length 0, the sentential form has no non-terminals, so it is an encoded
  string graph. Otherwise, consider a derivation of length $n$. After the
  first step, any newly-introduced wire-vertices will have in-degree and out-degree
  at most 1 by W1 and W2, whereas the degrees of any already existing wire-vertices will
  not increase by W3. N1 and N2 will ensure that any resulting
  node-vertices will only be connected by edges with encoding labels, so the
  result is an ESG-form. Thus we can apply the induction hythothesis.
  Noting that $S$ is, in particular, an ESG-form completes the proof. \qed
\end{proof}
\begin{lemma}\label{lem:bounded_wire} \rm
  For every B-ESG grammar $B=(G,T)$, there exists $n \in \mathbb{N}$, such that
  if $H \in L(B)$ then $H$ does not contain a wire with size
  bigger than $n$.
\end{lemma}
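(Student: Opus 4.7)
The plan is to identify a bound coming from the finite data of $B$ and verify it by two inductions. Let $M_G := \max\{|V_D| : X \to (D,C) \in P\}$ be the maximum number of vertices in the RHS of any production of $G$, and let $M_T := \max\{|V_R| : R \text{ is the RHS of a rule in } T\}$. I claim $n := \max(M_G, M_T)$ suffices.

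The main technical step is to show, by induction on the length of a derivation $S \Longrightarrow_*^G H_1$, that every wire in every sentential form has at most $M_G$ wire-vertices. The base case is trivial since $S$ has no wires. For the inductive step, consider expanding a non-terminal $y$ in $H$ via $X \to (D,C)$ to produce $H'$, and let $W$ be a wire in $H'$. The crucial observation uses W2: no connection instruction targets a wire-vertex, so the expansion introduces no new edge from a wire-vertex of $H$ to a wire-vertex of $D$. Since edges among vertices of $H$ other than those incident to $y$ are unchanged, and $y$ is not a wire-vertex, wire-vertices inherited from $H$ keep exactly the same wire-vertex neighbors in $H'$; likewise for $D$. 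Hence $W$'s vertices all come from $H$ or all come from $D$. In the first case $W$ is also a wire of $H$ (same vertices, same internal adjacencies, and still maximal because no new wire-vertex neighbors are introduced), so $|W| \le M_G$ by the induction hypothesis. In the second, $|W| \le |V_D| \le M_G$.

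Next I address the decoding phase $H_1 \Longrightarrow_*^T H_2$ by a second induction on the number of decoding steps. A single decoding rewrite replaces an encoding edge $N_1 \xrightarrow{\alpha} N_2$ by a fixed connected string graph whose boundary is $\{N_1, N_2\}$. Because $N_1,N_2$ are node-vertices, and the wire-vertices of $H_1$ were never incident to the encoding edge (encoding edges connect node-vertices by N1), the adjacencies among wire-vertices inherited from the intermediate graph are untouched; the newly introduced wire-vertices lie entirely inside the RHS. Any wire the RHS contributes is separated from pre-existing wires by $N_1$ or $N_2$ (node-vertices, not wire-vertices), so no merging can occur. Hence each newly introduced wire has at most $M_T$ wire-vertices and each pre-existing wire keeps its length, so the bound $\max(M_G, M_T)$ is preserved across decoding.

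The main obstacle is the ``no merging'' claims in both phases: that expansion does not glue two previously distinct wires into one longer wire, and that decoding does not splice an external wire of $H_1$ onto an internal wire of a decoding RHS. Both are handled by the same principle, namely that a new wire-vertex adjacency is what would be required to merge wires, and both W2 (for productions) and the fact that the decoding boundary consists of node-vertices (for decoding rules) prevent any such edge from being created. Combining the two inductions then gives that every $H \in L(B)$ has all wires of length at most $n$.
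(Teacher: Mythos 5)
Your proof is correct and follows essentially the same route as the paper's: bound the wire length by a constant read off from the finite bodies of the productions of $G$ and the decoding rules of $T$, use W2 to show that expanding a non-terminal never creates a wire-vertex--wire-vertex edge between old and new vertices (so wires never merge during derivation), and use the fact that decoding rules glue only along node-vertices to show the bound survives decoding. You merely spell out the two inductions more explicitly and use a slightly coarser constant (vertex counts of production bodies rather than the longest wire occurring in them), which changes nothing.
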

\begin{proof}
  Let $n$ be the length of the longest wire in the bodies of the productions in
  $G$ and $T$, and consider an arbitrary sentential form obtained from $S
  \Longrightarrow_*^G H'$. From condition \textit{W2}, we see that expanding any
  of the non-terminals cannot create a new edge between an already established
  wire-vertex and a newly created wire-vertex. Therefore, a concrete derivation
  in $G$ will produce an encoded string graph with maximum length of any wire
  $n$. Then, while doing the decoding, $T$ will only replace edges between
  node-vertices and therefore a wire longer than $n$ cannot be established.
  \qed
\end{proof}
Naturally, we want to be able to decide if a given string graph is in a B-ESG
grammar. Since there should be no distinction between wire-homeomorphic string
graphs, we state the membership problem as follows:
\begin{problem}[Membership]
Given a string graph $H$ and a B-ESG grammar $B$, does
there exist a string graph $\widetilde H \sim H$, such that $\widetilde H \in L(B)$?
In such a case, construct a derivation sequence $S \Longrightarrow_* \widetilde
H$.
\end{problem}
\begin{theorem}\label{lem:member} \rm
  The membership problem for B-ESG grammars is decidable.
\end{theorem}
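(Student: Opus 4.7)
The plan is to reduce the membership problem to a finite search, combining the wire-length bound of Lemma~\ref{lem:bounded_wire} with the classical decidability of membership for B-edNCE grammars. Given $H$ and $B = (G, T)$, I first compute the unique minimal wire-homeomorphic representative $H_\mathrm{min}$ of $H$ by contracting wires maximally. Let $n$ be the wire-length bound from Lemma~\ref{lem:bounded_wire}. Any $\widetilde H \sim H$ that could lie in $L(B)$ must have every wire of length at most $n$, so it is obtained from $H_\mathrm{min}$ by independently padding each wire (via splits of wire-vertices) up to some length at most $n$. If $H_\mathrm{min}$ already contains a wire of length greater than $n$, I immediately return ``no''; otherwise there are only finitely many such candidates $\widetilde H$, which can be effectively enumerated.

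For each candidate $\widetilde H$, I will enumerate all encoded string graphs $H_1$ that decode to $\widetilde H$ under $T$. Since the LHS of every decoding rule is a single encoding edge between two node-vertices, and the RHS is a fixed finite connected string graph with those two node-vertices as its only external connection points (no inputs, outputs, or encoding labels), any such $H_1$ is obtained from $\widetilde H$ by selecting a (possibly empty) family of pairwise disjoint subgraphs, each isomorphic to the RHS of some decoding rule (with the two marked node-vertices matched on the nose), and collapsing each into a single encoding edge carrying the corresponding symbol. Only finitely many such selections are possible, and each determines a unique candidate $H_1$ effectively.

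Finally, for each candidate $H_1$ I will check whether $H_1 \in L(G)$ using the classical decidability of the membership problem for B-edNCE grammars --- in fact, the more general C-edNCE class already admits polynomial-time membership testing~\cite{graph_grammar_handbook}. If any such check succeeds, the witnessing B-edNCE derivation concatenated with the corresponding decoding sequence yields the required concrete derivation $S \Longrightarrow_*^G H_1 \Longrightarrow_*^T \widetilde H$; otherwise no $\widetilde H \sim H$ lies in $L(B)$.

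The main obstacle will be the second step: although the decoding system $T$ is confluent, its inverse is highly non-deterministic, and one must argue that the preimage set of $\widetilde H$ under $T$-normalisation is both finite and effectively enumerable. This is where the invariant structure built into the decoding rules pays off: because each rule fixes exactly two node-vertices as its boundary and produces an RHS with no external connections or further encoding labels, reversing the decoding is reduced to a subgraph matching problem with a fixed finite pattern set, whose set of disjoint-match families in the finite graph $\widetilde H$ is manifestly finite.
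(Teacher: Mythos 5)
Your proposal is correct and follows essentially the same strategy as the paper: use Lemma~\ref{lem:bounded_wire} to cut the wire-homeomorphism class of $H$ down to finitely many candidates $\widetilde H$, reduce each to finitely many pre-decoding candidates $H_1$, and invoke decidability of B-edNCE membership. The only (inessential) difference is that you characterise the $T$-preimage of $\widetilde H$ explicitly by reverse subgraph matching, whereas the paper simply brute-forces over all graphs of size at most that of $\widetilde H$, using that decoding strictly increases size.
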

\begin{proof}
  First, we show that exact membership (i.e. not up to wire-homeomorphism) is
  decidable. From Theorem~\ref{lem:besg_language}, we know that any concrete B-ESG
  derivation produces an encoded string graph which is then decoded to a string
  graph. Since the decoding sequence can only increase the size of a graph, we
  can limit the problem to considering all graphs of size smaller than $H$.
  However, there are finitely many graphs whose size is smaller than $H$. For
  each such graph $H'$, we can then decide if $H' \in L(G)$ (this is the
  membership problem for B-edNCE grammars). Finally, we check if $H'
  \Longrightarrow_*^T H$ which is also clearly decidable. If no such graph $H'$
  exists, then the answer is no and otherwise the answer is yes.

  We now generalise to the wire-homeomorphic case. There may be infinitely many
  string graphs $\widetilde H$ such that $\widetilde H \sim H$, but by using
  Lemma~\ref{lem:bounded_wire}, it suffices to consider only those $\widetilde
  H$ which do not have wires longer than some fixed $n \in \mathbb{N}$. There
  are finitely many of these, so we can check if at least one of them is in
  $L(B)$ as before. Finally, since derivation sequences are recursively
  enumerable, if $\widetilde H \in L(B)$, we can also construct a concrete
  derivation sequence $S \Longrightarrow_* \widetilde H$.
  \qed
\end{proof}
In section~\ref{sec:transforming}, we will show how to use B-ESG grammars for rewriting. To do
this, we must show that the grammar produces some graphs which can be matched onto a given string graph, and
ideally that the set of \textit{all} matchings is finite, so that we can enumerate all of the relevant equalities.
This is not true in general, but it is true whenever the B-ESG grammar satisfies
some simple conditions:
\begin{definition}[Match-exhaustive B-ESG grammar]\label{def:non-bare} \rm
  We say that a B-ESG grammar $B = (G,T)$ is \textit{match-exhaustive}, if (1)
  any production $X \to (D,C)$ which contains a bare wire in $D$ has a
  finite bound on the number of times it can be expanded in any sentential form
  (2) no production contains an isolated wire-vertex in its body (3) there are
  no empty productions and (4) there are no productions consisting of a single node which is non-terminal.
\end{definition}
Note that condition (1) can be easily decided by examining all productions which
could possibly lead back to themselves via non-terminals, and seeing if they
contain any bare wires. In \cite{graph_grammar_handbook} it was furthermore shown
that any grammar can be transformed into an equivalent grammar satisfying
conditions (3) and (4).
\begin{problem}[Match-enumeration]
  Given a string graph $H$ and a B-ESG grammar $B$, enumerate
  all of the B-ESG concrete derivations $S \Longrightarrow_*^{B} K$, such that
  there exists a matching $m : K \to \widetilde H$ for some $\widetilde H \sim H$.
\end{problem}
\begin{theorem}\label{thm:match} \rm
  The match-enumeration problem for a B-ESG grammar $B$ is decidable if
  $B$ is a match-exhaustive grammar.
\end{theorem}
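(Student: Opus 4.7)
The plan is to reduce match-enumeration to a finite search by bounding both (a) the size of any candidate string graph $K$ that admits a matching into some $\widetilde H \sim H$, and (b) the number of derivations producing each such $K$. For (a), injectivity of any matching together with the fact that wire-homeomorphism preserves the set of node-vertices gives $|V_{\mathrm{node}}(K)| \leq |V_{\mathrm{node}}(H)|$; Lemma~\ref{lem:bounded_wire} bounds every wire of $K$ by a fixed length $n$ depending only on $B$. It remains to bound the number of wires of $K$: attached wires can be bounded by $|V_{\mathrm{node}}(H)|$ times the maximum node-vertex degree of $H$ (via injectivity of $m$ on node-vertex incidences, noting that node-vertex degrees are preserved by wire-homeomorphism); bare wires are bounded via match-exhaustiveness condition~(1), which limits how often productions with bare wires can be expanded; circles of $K$ must map to circles of $\widetilde H$ and hence are bounded by the (wire-homeomorphism-invariant) count of circles in $H$. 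Condition~(2) rules out isolated wire-vertices entirely, so this exhausts the wire-vertex possibilities and $|V(K)|$ is uniformly bounded.

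With $|V(K)|$ bounded, only finitely many candidate $K$ up to isomorphism need consideration, and these can be enumerated by brute force. For each, Theorem~\ref{lem:member} decides whether $K \in L(B)$ and constructs a witnessing derivation in the positive case. To decide whether a matching $m : K \to \widetilde H$ exists for some $\widetilde H \sim H$, I would bound the relevant $\widetilde H$ by wire length (just as in the proof of Theorem~\ref{lem:member}) so that only finitely many $\widetilde H$ need be inspected, and then directly check injective subgraph matching satisfying the no-dangling-wires condition. For enumerating all concrete derivations $S \Longrightarrow_*^B K$ for a fixed $K$, I would use conditions~(3) and~(4): no empty productions and no productions consisting of a single non-terminal mean each derivation step either introduces at least one terminal vertex or strictly decreases the non-terminal count via a single-terminal production; this bounds the total derivation length linearly in $|V(K)|$, and with only finitely many production choices at each step, the set of derivations for each $K$ is finite and computable.

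The main obstacle I anticipate is cleanly bounding the number of wires in $K$: attached wires with a boundary endpoint, as well as bare wires, can embed into the middle of arbitrarily long wires of $\widetilde H$, so their counts cannot be bounded by intrinsic features of $H$ alone. The match-exhaustive conditions are engineered precisely to control this---condition~(1) restricts bare-wire productions at the grammar level, condition~(2) removes the degenerate case of isolated wire-vertices, and conditions~(3)--(4) forbid the unit-style productions that would otherwise produce infinitely many derivations yielding the same $K$---so their combined effect reduces the problem to a finite decidable search.
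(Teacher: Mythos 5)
Your proposal is correct and follows essentially the same route as the paper: bound the node-vertices, wire lengths (via Lemma~\ref{lem:bounded_wire}), and wire count of any candidate $K$ using the match-exhaustive conditions, enumerate the finitely many resulting candidates, and use conditions (3)--(4) to bound derivation length. The only difference is bookkeeping in the wire-count step --- the paper bounds non-bare wires by $2W$ (at most two non-bare wires of $K$ mapping onto each wire of $\widetilde H$) rather than splitting into attached wires and circles as you do, but both bounds are valid.
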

\begin{proof}
  Let $W$ be the number of wires in a string graph $H$. Then, for any
  $\widetilde H \sim H$, we know that $\widetilde H$ must have the same number of wires
  and node-vertices as $H$. However, the number of wire-vertices in $\widetilde H$ may be
  arbitrarily large.

  Condition (1) of Definition~\ref{def:non-bare} implies that there exists $n
  \in \mathbb{N}$, such that, for any $K \in L(B)$, $K$ has at most $n$ bare
  wires. Any matching of string graphs will map at most two non-bare wires onto
  a single wire. So, if $K$ has a matching on $\widetilde H$, then it can have at most $2W$
  non-bare wires. Therefore, $K$ can have at most $2W+n$ wires. From
  Lemma~\ref{lem:bounded_wire} we know that the length of any such wire
  is bounded. Condition (2) of Definition~\ref{def:non-bare} implies that
  any wire-vertex in $K$ is part of some wire, thus there is a bound on the
  number of wire-vertices in $K$ and we already know the number of node-vertices
  is also bounded. Thus, there are finitely many $K \in L(B)$ which could possibly
  have a matching onto some $\widetilde H \sim H$, so we can enumerate them.

  Finally, conditions (3) and (4) from Definition~\ref{def:non-bare} imply
  that the sentential forms of $B$ can only increase in size and
  therefore for any $K$ satisfying the above conditions there are finitely
  many concrete derivations $S \Longrightarrow_*^{B} K$ which we can
  enumerate.
  \qed
\end{proof}
\presec
\section{B-ESG rewrite patterns}\label{sec:rewrite}
\postsec
In the previous section we introduced a new type of grammar which generates
string graphs and which has some important decidability properties. Thus,
we can use a single B-ESG grammar to represent a single family of string graphs.
However, we still have not described how to encode equalities between families
of string graphs. This is the primary contribution of this section. We
introduce the notions of \textit{B-ESG rewrite pattern} and \textit{B-ESG pattern
instantiation} which show how this can be achieved in a formal way and then we
give examples of important equalities which cannot be encoded using the !-graph
formalism, but are expressible using B-ESG rewrite patterns.
\begin{definition}[B-ESG rewrite pattern]\label{def:besg_rewrite} \rm
  A \textit{B-ESG rewrite pattern} is a pair of B-ESG grammars $B_1 = (G_1, T)$
  and $B_2 = (G_2,T)$, where
  $G_1=(\Sigma,\Delta,\Gamma,\Gamma,P_1,S)$ and
  $G_2=(\Sigma,\Delta,\Gamma,\Gamma,P_2,S)$, such that
  there is a bijective correspondence between the productions
  in $P_1$ and $P_2$ given by $X \to (D_1,C_1) \in P_1$ iff
  $X \to (D_2,C_2) \in P_2$ and the corresponding pairs of productions
  satisfy the following conditions:
  \begin{enumerate}
    \item[NT:] There is a bijection between the non-terminal nodes in $D_1$ and
      the non-terminal nodes in $D_2$ which also preserves their labels.
    \item[IO:] There is a bijection between the inputs (resp. outputs) in $D_1$ and the
      inputs (resp. outputs) in $D_2$.
  \end{enumerate}
\end{definition}
Condition \textit{NT} ensures that we can
perform identical derivation sequences on both grammars $G_1$ and $G_2$ in the
sense that we can apply the same order of productions to corresponding non-terminal
vertices. This should become more clear from definition \ref{def:inst}.
When working with B-ESG rewrite patterns, without loss of generality, we will assume
that the corresponding inputs/outputs/non-terminal nodes are identified between
the two grammars by sharing the same name, instead of using a bijective function
explicitly.
\begin{definition}[B-ESG pattern instantiation]\label{def:inst} \rm
  Given a B-ESG rewrite pattern $(B_1, B_2)$, a B-ESG pattern instantiation is
  given by a pair of concrete derivations:
  \[S \Longrightarrow_{v_1,p_1}^{B_1} H_1 \Longrightarrow_{v_2,p_2}^{B_1} H_2
  \Longrightarrow_{v3,p3}^{B_1} \cdots \Longrightarrow_{v_n,p_n}^{B_1} H_n
  \Longrightarrow_*^T F  \]
  and
  \[S \Longrightarrow_{v_1,p_1}^{B_2} H_1' \Longrightarrow_{v_2,p_2}^{B_2} H_2'
  \Longrightarrow_{v3,p3}^{B_2} \cdots \Longrightarrow_{v_n,p_n}^{B_2} H_n'
  \Longrightarrow_*^T F' \]
\end{definition}
In other words, we use an identical derivation sequence in the two B-edNCE grammars
to get two encoded string graphs, which are then uniquely decoded using the
productions of $T$. These ideas are similar to the \textit{pair grammars} approach
presented in \cite{pair_grammars}, but different in that we are using a more
general notion of grammar, our extension to the context-free grammars is more
limited and our focus is on string diagram reasoning rather than computer
program representation. These ideas have been further generalised in
\cite{triple_grammars}.
\begin{theorem}\label{lem:inst} \rm
  Every B-ESG pattern instantiation is a string graph rewrite rule.
\end{theorem}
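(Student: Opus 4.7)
The plan is to exhibit explicitly the span $F \leftarrow B \rightarrow F'$ whose legs identify the common boundary. By Theorem~\ref{lem:besg_language}, both $F$ and $F'$ are string graphs, so the main work is to construct a label-preserving bijection between the inputs (resp. outputs) of $F$ and of $F'$. Given such a bijection, we take $B$ to be the discrete string graph whose vertices are these boundary wire-vertices (with the common labels), together with the obvious inclusions into $F$ and $F'$; these are string graph homomorphisms whose images are exactly the boundaries of $F$ and $F'$, which is what is required for a string graph rewrite rule.

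To construct the bijection, I would proceed by induction on the length $i$ of the paired derivations, maintaining the invariant that there is a label-preserving bijection between the inputs (resp. outputs) of $H_i$ and $H_i'$ (treating a wire-vertex adjacent only to non-terminals as already an input or output, according to the direction of its unique edge). The base case $i=0$ is trivial since $H_0 = H_0' = S$ has no wire-vertices. For the inductive step, applying the paired productions $X \to (D_1,C_1)$ and $X \to (D_2,C_2)$ to corresponding non-terminals (this is well-defined by condition NT) produces two classes of boundary vertices in $H_{i+1}$ and $H_{i+1}'$: those inherited from $H_i$, $H_i'$, and those newly introduced from $D_1$, $D_2$. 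The new ones are in bijection by condition IO. For the inherited ones, conditions W2 and W3 ensure that no incompatible new edges can be attached to any wire-vertex by a connection instruction, and W4 guarantees that any wire-vertex previously connected to the expanded non-terminal in direction $d$ receives a corresponding edge of the same direction to some vertex in the body of the production. Consequently the in-degree and out-degree of every wire-vertex is preserved, and so is its status as an input or output. Combining the two bijections yields the invariant for $H_{i+1}$, $H_{i+1}'$.

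It remains to pass from $H_n$, $H_n'$ to $F$, $F'$ via decoding under $T$. By condition N1, encoding edges only connect pairs of node-vertices, and by the definition of a decoding system the RHS of each rule in $T$ contains no inputs, outputs, or encoding labels. Hence each decoding rewrite is local to a pair of node-vertices and introduces only internal wire-vertices, never touching an existing input or output. The bijection built above therefore transports directly to $F$ and $F'$, which completes the construction of $B$ and of the span. The main obstacle is the inductive step, and specifically verifying that W4 really forces preservation of the direction of every wire-vertex/non-terminal adjacency; once this is correctly read off from the definition of B-ESG grammars, the rest of the argument is essentially bookkeeping.
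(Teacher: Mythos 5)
Your proposal is correct and follows essentially the same route as the paper's proof: induction on the paired derivation to show corresponding sentential forms have matching inputs/outputs (using NT, IO, W2, W3, W4), followed by the observation that decoding under $T$ cannot create or destroy inputs or outputs. The only difference is that you make explicit the span $F \leftarrow B \rightarrow F'$ and the bijection, which the paper leaves implicit via its naming convention for corresponding boundary vertices.
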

\begin{proof}
  Consider a concrete derivation as in Definition~\ref{def:inst}.
  From Theorem~\ref{lem:besg_language}, we know that $F$ and $F'$ are string graphs.
  We have to show that they have the same set of inputs/outputs. Note, that the
  decoding process using the productions of $T$ cannot establish any new inputs or
  outputs and thus we can reduce the problem to showing that $H_n$ and
  $H_n'$ have the same sets of inputs/outputs.

  Setting $H_0:=S:=H_0'$, we can prove by induction that any pair of sentential
  forms $(H_i, H_i')$, has the same set of inputs. This is trivially true for
  $(H_0, H_0')$.  Assuming that $(H_k,H_k')$ have the same set of inputs,
  observe that $(H_{k+1}, H_{k+1}')$ is obtained by applying a corresponding
  pair of productions to $(H_k, H_k')$. These productions satisfy all of the
  listed conditions in Definition~\ref{def:besg} and
  Definition~\ref{def:besg_rewrite}. In particular, conditions \textit{W3} and
  \textit{W4} guarantee that the in-degree of all wire-vertices in both $H_k$
  and $H_k'$ won't be affected.  Condition \textit{W2} implies that the newly
  created wire-vertices in $H_{k+1}$ and $H_{k+1}'$ are not connected to any
  of the previously established vertices in $H_k$ and $H_k'$ respectively.
  Combining this with condition \textit{IO} ensures that any newly created
  wire-vertices are inputs in $H_{k+1}$ iff they are inputs in $H_{k+1}'$.
  Thus, $H_{k+1}$ and $H_{k+1}'$ have the same set of inputs. In particular,
  $H_n$ and $H_n'$ have the same inputs. By symmetry, $H_n$ and $H_n'$ have the same output vertices.
  \qed
\end{proof}
As mentioned in the introduction, we now gain previously non-existent expressive
power for families of string graph rewrite rules. For instance, we can write a
rule that merges a chain of white node-vertices, each with 1 input, into a
single white node-vertex with $n$ inputs as follows:
\ctikzfig{line-example-rule}
In the introduction, we also gave an example of a rule involving clique-like
graphs. In fact, a minor modification to~\eqref{eq:complete-rule-example}
yields a rule that is directly relevant for quantum computation. The following
rule, called \textit{local complementation}:
\ctikzfig{local-comp-B-ESG}
is crucial to establishing the completeness of a rewrite system called the
\textit{ZX-calculus}, which has been used extensively for
reasoning about quantum circuits and measurement-based quantum computation. This
rule uses a non-trivial decoding system, where the $h$-labelled edges are interpreted as wires containing a single node-vertex.
Here we are using `$*$' to mean `any node-vertex label'. The node-vertices
labelled $\pm \frac\pi2$ represent \textit{quantum phase gates}, whereas
the $H$ is a \textit{Hadamard gate}. The interested reader can find a
detailed description of the ZX-calculus in e.g.~\cite{bigmainzx} the
completeness theorem based on local complementation in~\cite{Backens:2012fk}.
\presec
\section{Transforming B-ESG grammars}\label{sec:transforming}
\postsec
In the previous section, we showed how a B-ESG rewrite pattern can encode
(infinitely) many equalities between string diagrams, where we represent the
equalities as string graph rewrite rules. Thus, we can use B-ESG rewrite
patterns to encode axioms and axiom schemas of a string diagram theory.
However, in order to perform equational reasoning between families of string
diagrams, we need to show how we can use such axioms in our formalism in order
to obtain new equalities. This is the primary contribution of this section. In
particular, we show how we can transform a B-ESG rewrite pattern into another
one using a string graph rewrite rule in an admissible way. Finally, we
generalise this result by showing how we can transform B-ESG rewrite patterns
using another B-ESG rewrite pattern in an admissible way.
\begin{definition}[Final subgraph] \rm
  Given a production $X \to (D,C)$ in a B-edNCE grammar, we say that
  a subgraph $S$ of $D$ is final if $S$ is a string graph which is not
  adjacent to any non-terminals in $D$ and no vertex of $S$ has associated
  connection instructions in $C$.
\end{definition}
\begin{definition}[B-ESG transformation step] \rm
  We say that a given grammar $B = (G,T)$ can be transformed
  into another grammar $B'=(G',T)$ using a given string graph rewrite rule $SR = L \leftarrow I
  \rightarrow R$, if (1) there exists a production $p = X \to (D,C)$ in $G$,
  such that $L$ can be matched into a final subgraph of $D$ (2) $G'$ has the
  same productions as $G$, except for $p$ which has been modified to
  $p' := (D',C)$ where $D'$ is the result of applying the rewriting rule $SR$ to
  the matching of (1).
\end{definition}
\begin{theorem}\label{lem:transform} \rm
  Given a B-ESG grammar $B$, the pair $(B,B')$ is a B-ESG rewrite pattern,
  where $B'$ is the result of a B-ESG transformation step applied to $B$.
\end{theorem}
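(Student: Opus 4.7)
The plan is to verify the three requirements of Definition~\ref{def:besg_rewrite}: that both $B$ and $B'$ are B-ESG grammars, that there is a bijection between their productions, and that every corresponding pair satisfies NT and IO. The obvious bijection is the identity on all productions except for $p = X \to (D, C)$, which is sent to $p' = X \to (D', C)$. Every unchanged pair trivially satisfies NT and IO and automatically inherits the B-ESG conditions from $B$, so the substance of the proof consists of verifying that $p'$ is itself a valid B-ESG production and that the pair $(p, p')$ satisfies NT and IO.

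To show that $p'$ satisfies the conditions of Definition~\ref{def:besg}, I would exploit the locality of the transformation step: the rewrite takes place entirely inside a final subgraph $S$, which is by definition a string graph (hence contains no encoding edges), is not adjacent to any non-terminal, and none of whose vertices carry connection instructions in $C$. Since $C$ is unchanged, the conditions constraining only $C$ (N2, W3, W4) are immediately preserved. Conditions N1 and W1 follow because $R$ is a string graph, so its edges do not directly connect node-vertices and its wire-vertices have in- and out-degree at most one, while outside the rewritten region $D'$ coincides with $D$; the DPO construction does not create node-to-node edges at the interface because the match boundary consists of wire-vertices. Condition W2 holds because $C$ is unchanged and neither the wire-vertices inherited from $D \setminus S$ nor those newly introduced from $R$ acquire any connection instructions.

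For the pair $(p, p')$, condition NT is immediate: $S$ is a string graph (containing no non-terminals) and is not adjacent to any non-terminal, so the rewrite neither adds, removes, nor relabels any non-terminal of $D$. For condition IO, the key observation is that string-graph DPO rewriting preserves the input/output interface. Any input or output of $D$ lying in $D \setminus S$ is literally untouched; any input or output of $D$ lying inside $S$ must in fact sit on the match boundary, since a strictly interior wire-vertex of $m(L)$ has both an incoming and outgoing edge inside the match and thus cannot be an input or output of $D$, and the match boundary is preserved by the pushout construction. I expect the main obstacle to be precisely this IO verification: one must track carefully how inputs and outputs of $D$ interact with the DPO rewrite, relying on the standard preservation property of the common boundary $I$ of the rewrite rule together with the locality afforded by the \emph{final} subgraph hypothesis.
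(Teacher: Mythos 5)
Your proposal is correct and follows essentially the same route as the paper's own proof: both exploit the locality of rewriting inside a final subgraph to conclude that the connection instructions and non-terminals are untouched (giving N2, W2, W3, W4, NT), that replacing one string graph by another preserves N1 and W1, and that string graph rewriting preserves inputs and outputs (giving IO). You simply spell out in more detail the IO argument and the fact that interface vertices are boundary wire-vertices, which the paper leaves implicit.
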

\begin{proof}
  Let $B'=(G',T)$. Assume that the
  modified production is $p = X \to (D,C)$ of $B=(G,T)$ and the corresponding
  production in $G'$ is $p' = X \to (D',C)$. Rewriting $D \leadsto D'$ only
  affects a final subgraph $S$ of $D$ and thus it cannot establish any edges between
  non-terminals or modify the connection instructions. Therefore, $G'$ is a
  B-edNCE grammar and conditions \textit{N2,W2,W3,W4} and \textit{NT} are satisfied.
  Conditions \textit{W1} and \textit{N1} are also
  preserved, because we are replacing a string graph $S$ with another string graph
  and the edges between vertices in $D\backslash S$ and $S$ are preserved. Thus,
  $B'$ is a B-ESG grammar. Finally, condition \textit{IO} is satisfied, because
  string graph rewriting does not create or remove inputs/outputs.
  \qed
\end{proof}
\begin{theorem}\label{thm:admisible} \rm
  Let $B$ be a B-ESG grammar, let $SR$ be a string graph rewrite rule
  and let $(B,B')$ be the B-ESG rewrite pattern
  induced by $B$ and $SR$ as per Theorem~\ref{lem:transform}. Then, $(B,B')$ is admisible in the sense that any pattern
  instantiation $S \Longrightarrow_*^B F$ together with
  $S \Longrightarrow_*^{B'} F'$ are such that the string graph $F'$ can
  be obtained from $F$ via repeated applications of $SR$.
\end{theorem}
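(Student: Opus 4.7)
The plan is to proceed by induction on the length $n$ of the B-edNCE derivation shared by the two instantiations, maintaining the invariant that after $k$ steps the sentential forms $(H_k, H_k')$ agree outside of $m_k$ disjoint patches, each being a copy of $L$ in $H_k$ paired with a copy of $R$ in $H_k'$ sharing the same boundary $I$ and the same external connectivity, where $m_k$ counts how many times the modified production $p$ has been applied among the first $k$ steps. Moreover, each such patch is \emph{final} in the current sentential form, in the sense that it is not adjacent to any non-terminal and all its external edges touch only its boundary vertices.

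The base case $k=0$ is trivial. For the inductive step, if the $(k+1)$-th production is not $p$, then $G$ and $G'$ share this production and it is applied to the same non-terminal in both sentential forms, substituting the same graph-with-embedding in each. Since no existing patch is adjacent to any non-terminal (finality) and no connection instruction in $C$ targets a patch interior, neither the newly inserted vertices nor the connection-induced edges can modify any existing patch, so the invariant persists with $m_{k+1} = m_k$. If the $(k+1)$-th production is $p$ vs.\ $p'$, then by the definition of B-ESG transformation step the bodies $D$ and $D'$ differ exactly in the final subgraph being rewritten while the connection relation $C$ is unchanged; hence this step creates a new $L$-patch in $H_{k+1}$ and a matching $R$-patch in $H_{k+1}'$, which are final by construction, giving $m_{k+1} = m_k + 1$.

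Next, one handles the decoding step. The boundary $I$ of any string graph rewrite rule consists of wire-vertices, and by condition \textit{N1} no encoding edge can touch a wire-vertex; moreover the interiors of $L$ and $R$ are string graphs, so they contain no encoding edges either. Hence every encoding edge in $H_n$ (and correspondingly in $H_n'$) lies entirely in the ambient portion of the sentential form, the productions of $T$ rewrite the same encoding edges in both graphs to the same string subgraphs, and no patch is disturbed. The resulting string graphs $F$ and $F'$ therefore still differ only at $m_n$ disjoint patch locations, and at each location the matching of $L$ into $F$ remains valid, since the no-dangling-wires condition is preserved throughout both the derivation and the decoding.

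Because the $m_n$ patches are pairwise disjoint, the $SR$-rewrites at these locations commute, and applying $SR$ once at each patch in $F$ yields $F'$. The main obstacle is the inductive step for non-$p$ productions: one must argue carefully that no subsequent derivation step can add edges to the interior of an existing patch. This follows by combining the finality of the patch (no adjacency to non-terminals and no associated connection instructions) with the B-edNCE boundary condition, which prevents newly-introduced non-terminals from ever becoming adjacent to a patch via the substitution operation.
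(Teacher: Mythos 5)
Your proof is correct and follows essentially the same route as the paper's: the key observation in both is that the rewritten region is a \emph{final} subgraph, so each expansion of the modified production creates a fresh copy that is not adjacent to any non-terminal and whose neighbourhood is never altered by later derivation steps, whence $F'$ is obtained from $F$ by one application of $SR$ per occurrence of that production in the derivation. Your write-up is more detailed than the paper's --- in particular you state the induction invariant explicitly and verify that the decoding phase cannot disturb the patches (via condition \textit{N1} and the fact that patch boundaries consist of wire-vertices), a step the paper passes over in silence.
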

\begin{proof}
  Let $p$ be the production of $B$ which is modified by $SR$ and let $p'$ be
  the result of the modification. Because the modification is done on a
  final subgraph $S$ of $p$, as $p$ is expanded, a copy of $S$ is created
  which is not adjacent to any previously established vertices. Moreover,
  the copy $S$ is not adjacent to any non-terminals and thus its
  neighbourhood won't change in the following sentential forms. Therefore,
  to obtain $F'$ from $F$, we have to apply the rule $SR$ exactly $n$ times,
  where $n$ is the number of times the production $p$ appears in the
  pattern instantiation.
  \qed
\end{proof}
\begin{corollary} \rm
  If $B$ is a B-ESG grammar and $(B_1, B_2)$ is a B-ESG
  rewrite pattern with $B_1$ match-exhaustive, then we can enumerate the pattern instantiations of $(B_1, B_2)$
  which induce an admisible  rewrite pattern $(B, B')$.
\end{corollary}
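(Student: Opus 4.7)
The plan is to reduce the enumeration to finitely many applications of Theorem~\ref{thm:match} (match-enumeration for $B_1$). By Definition of B-ESG transformation step and Theorem~\ref{lem:transform}, a pattern instantiation of $(B_1, B_2)$ with resulting string graph rewrite rule $SR = F \leftarrow I \to F'$ induces an admissible $(B, B')$ if and only if the LHS $F$ has a matching into a final subgraph of some production of $B$. So it suffices to enumerate the $B_1$-side derivations whose decoded graph $F$ admits such a matching, then lift each to a $(B_1, B_2)$ pattern instantiation by running the identical production sequence in $B_2$ (which is well-defined by condition \textit{NT} of Definition~\ref{def:besg_rewrite}).

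First, I would enumerate the final subgraphs of $B$. Since $B$ has finitely many productions $X \to (D,C)$ and each body $D$ is finite, the set of subgraphs of $D$ satisfying the finality conditions (being a string graph, containing no non-terminal-adjacent vertex, and carrying no connection instruction) is finite; taking the union over all productions yields a finite collection $\mathcal{H}$ of candidate target string graphs.

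Second, for each $H \in \mathcal{H}$, I would invoke Theorem~\ref{thm:match} on the match-exhaustive grammar $B_1$ with target $H$. This produces a finite enumeration of concrete derivations $S \Longrightarrow_*^{B_1} K$ such that some $K$ matches into a $\widetilde{H} \sim H$. For each enumerated $B_1$-derivation, the corresponding $B_2$-derivation (obtained by applying the same sequence of productions to corresponding non-terminals) together with its decoding under $T$ gives a pattern instantiation of $(B_1, B_2)$ by Definition~\ref{def:inst}. By Theorem~\ref{lem:inst} this pattern instantiation is a string graph rewrite rule $SR$, by Theorem~\ref{lem:transform} $SR$ together with the matching into $H$ yields a transformation step $B \leadsto B'$, and by Theorem~\ref{thm:admisible} the resulting $(B, B')$ is admissible.

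The main thing to check is completeness: every admissible-inducing instantiation is produced by this procedure. But if $(B, B')$ is admissible with $B'$ obtained by applying $SR = F \leftarrow I \to F'$ to a final subgraph of some production $p$ in $B$, then $F$ matches into some $H \in \mathcal{H}$ (namely the affected final subgraph), so the $B_1$-side derivation producing $F$ will appear in the enumeration for that $H$, and the pattern instantiation is then recovered by pairing with its $B_2$-side. The principal subtlety is that matching is considered up to wire-homeomorphism, but this is exactly what Theorem~\ref{thm:match} handles, so no instantiation is missed. \qed
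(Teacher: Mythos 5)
Your proposal is correct and follows essentially the same route as the paper: combine Theorem~\ref{lem:inst} (instantiations are string graph rewrite rules), Theorems~\ref{lem:transform} and~\ref{thm:admisible} (those matching a final subgraph induce admissible patterns), and Theorem~\ref{thm:match} (enumerability). The paper's proof is just a terser version of this; your added detail of reducing to finitely many match-enumeration instances, one per final subgraph, is a faithful elaboration rather than a different argument.
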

\begin{proof}
  From Theorem~\ref{lem:inst}, any pattern instantiation of $(B_1, B_2)$ is a
  string graph rewrite rule. If such a string graph rewrite rule matches a
  final subgraph in some production of $B$, then it induces an admisible rewrite
  pattern $(B,B')$ as shown in Theorem~\ref{thm:admisible}. Finally, by
  Theorem~\ref{thm:match}, we can enumerate all of those instantiations and thus
  all such $B'$.
  \qed
\end{proof}
\presec
\section{Conclusion and future work}
\postsec
In this paper, we have defined a family of grammars that is suitable for
describing languages of string graphs and string graph rewrite rules. We have
also showed that these languages admit nice decidability properties. There are two
natural directions in which to extend this work. Firstly, the conditions of a
B-ESG grammar are sufficient, but not necessary to obtain a language consisting
of only string graphs. One might ask if there exist other natural and easily-decidable
conditions yielding such languages, and how those conditions relate to the B-ESG
conditions.

The second, and more compelling direction for future work is the
development of tools for reasoning with B-ESG rewrite patterns and, most
importantly, deriving new patterns. We took the first step in this direction in
Section~\ref{sec:transforming}, where we described a fairly limited technique
whereby grammars can be transformed by using string graph rewrite rules to
rewrite the final parts of some productions. However, in the case string graph
patterns based on !-boxes, which we briefly discussed in the introduction, we
additionally have the ability to do geniune pattern-on-pattern rewriting, and
derive new !-box rules using !-box induction. We expect both of these techniques
to extend to the context-free case. Thus, we would ultimately like a much richer
notion of rewriting, whereby B-ESG patterns can be used to rewrite non-final
parts of grammars, where e.g.~non-terminals are allowed to match on non-terminals,
subject to suitable consistency conditions. Similarly, we intend to
extend !-box induction to a more general structural induction principle that can
be used to derive B-ESG rewrite patterns from basic string graph rules. This will
provide a method for deriving infinite families of rules which is both extremely
powerful and capable of producing machine-checkable diagrammatic proofs.

\noindent \textbf{Acknowledgements.} We would like to thank the anonyomous
reviewers for their feedback. We also gratefully acknowledge
financial support from EPSRC, the Scatcherd European Scholarship, and the John
Templeton Foundation.
\bibliographystyle{plain}
\bibliography{vladimir_refs}
\postsec
\def\baselinestretch{1.0}

\end{document}